\documentclass[12pt]{article}
\usepackage{graphicx}
\usepackage{graphics}
\usepackage{fancybox}
\usepackage{amsmath,float,latexsym,psfrag,epsf,epsfig,amssymb,chicago,rotating}
\usepackage{amsthm,bm}

\usepackage{color}
\usepackage{url}
\usepackage{subcaption}
\usepackage{fullpage}
\usepackage[ruled,linesnumbered]{algorithm2e}
\usepackage{algorithmicx}






%

%

\newcommand{\comment}[1]{}



\newtheorem{lemma}{Lemma}[section]             

\newtheorem{case}{Example}

\usepackage{mathtools}

\newcommand{\vect}[1]{\boldsymbol{#1}}


\newcommand{\w}{{\mathbf{w}}}

\newcommand{\y}{{\mathbf{y}}}
\newcommand{\z}{{\mathbf{z}}}

\newcommand{\E}{{\mathbf{E}}}
\newcommand{\V}{{\mathbf{V}}}


\begin{document}

\begin{center}

{\Large \bfseries Investigation of the widely applicable Bayesian information criterion}
\vspace{5 mm}

{\large N. Friel$^\star$\footnote{Address for correspondence: \texttt{nial.friel@ucd.ie}}, J.P. McKeone$^\dagger$$^{\star\star}$, C.J. Oates$^\ddagger$$^{\star\star}$, A.N. Pettitt$^\dagger$$^{\star\star}$. } \\
{\textit{$^\star$School of Mathematics and Statistics and Insight Centre for Data Analytics, \\ University College Dublin, Ireland.\\
$^\ddagger$School of Mathematical and Physical Sciences, University of Technology, Sydney.\\
$^\dagger$School of Mathematical Sciences, Queensland University of Technology, Brisbane, Queensland, Australia.\\
$^{\star\star}$Australian Research Council Centre for Excellence in Mathematical and Statistical Frontiers. }} 

\today 

\vspace{5mm}

\end{center}

\begin{abstract}
The widely applicable Bayesian information criterion (WBIC) is a simple and fast approximation to the model evidence that has received little practical 
consideration. WBIC uses the fact that the log evidence can be written as an expectation, with respect to a powered posterior proportional to the likelihood raised
to a power $t^*\in{(0,1)}$, of the log deviance. Finding this temperature value $t^*$ is generally an intractable problem. 
We find that for a particular tractable statistical model that the mean squared error of an optimally-tuned version of WBIC with correct temperature $t^*$
is lower than an optimally-tuned version of thermodynamic integration (power posteriors). However in practice WBIC uses the a canonical choice of
$t=1/\log(n)$. Here we investigate the performance of WBIC in practice, for a range of statistical models, both regular models and singular models 
such as latent variable models or those with a hierarchical structure for which BIC cannot provide an adequate solution. Our findings are that, generally WBIC performs 
adequately when one uses informative priors, but it can systematically overestimate the evidence, particularly for small sample sizes. 

\end{abstract}
\paragraph{Keywords and Phrases:} Marginal likelihood; Evidence; Power posteriors; Widely applicable Bayesian information criterion.

\bibliographystyle{mybib}

\section{Introduction}

The Bayesian paradigm offers a principled approach to the issue of model choice, through examination of the model evidence, namely the probability of the data given the model. 
Suppose we are given data $y$ and assume there is a collection of competing models, $m_1,\dots,m_l$, each with associated parameters, $\theta_1,\dots,\theta_l$, respectively. 
Viewing the model indicators as parameters with prior distribution $p(m_k)$, the posterior distribution of interest is 
\begin{equation}
	p(\theta_k,m_k|y) \propto f(y|\theta_k,m_k)p(\theta_k|m_k)p(m_k) \nonumber
\end{equation} 
 where $f(y|\theta_k,m_k)$ is the likelihood of the data under model $m_k$ with parameters $\theta_k$ and $p(\theta_k|m_k)$ is the prior on the parameters in model $m_k$.

The constant of proportionality for the un-normalised posterior distribution for model $m_k$ is the marginal likelihood or evidence, 
\begin{equation}
	p(y|m_k) = \int_{\theta_k} f(y|\theta_k,m_k)p(\theta_k|m_k)d\theta_k. \nonumber
\end{equation}
This is a vital quantity in Bayesian model choice and developing good estimates of it continues to be an active area of research in computational statistics. Henceforth, for brevity of notation,
we will drop the dependence on $m_k$, so that we refer to the evidence, likelihood and prior distribution for a given model as, $p(y), f(y|\theta), p(\theta)$, respectively.

There are a growing number of techniques to evaluate the evidence, see for instance, Gelman and Meng~\citeyear{GelmanMeng1998} for a thorough review of importance, bridge and 
path sampling methods, Robert and Wraith~\citeyear{RobertWraith2009} 
for an updated review of such methods that includes the more recent mixture bridge-sampling approach \cite{ChopinRobert2007}, the generalised harmonic mean estimator 
\cite{GelfandDey1994} and nested sampling (\cite{Skilling2006}, or perhaps, \cite{Burrows1980}), in addition to \cite{FrielWyse2012} who compare the accuracy and 
computational burden of these methods. 

The contribution of this work is to explore a new method of approximating the evidence, the widely applicable Bayesian information criterion (WBIC) of 
Watanabe~\citeyear{Watanabe2013}. WBIC was motivated by the fact that the Bayesian information criterion (BIC or Schwarz criterion) \cite{Schwarz1978} is not applicable to 
singular models. A statistical model is termed regular if the mapping from model parameters to a probability distribution is one-to-one and if its Fisher information matrix 
is positive definite. Otherwise, a statistical model is singular. Singular models arise, for example, in latent variable models such as mixture models, hidden Markov models 
and hierarchical models such as artificial neural networks and so on. Singular models cannot be approximated by a normal distribution, which implies that
BIC and AIC are not appropriate for statistical model choice. Watanabe~\citeyear{Watanabe2013} has shown that WBIC converges to the model evidence, asymptotically as 
$n\rightarrow \infty$, for both singular and regular statistical. In this sense, WBIC is generalisation of BIC. 

WBIC is straightforward to evaluate, requiring only one Markov chain Monte Carlo (MCMC) chain to estimate the evidence. Thus far, WBIC has received no more than a cursory mention by 
Gelman \textit{et al.}~\citeyear{GelmanHwangVehtari2013} and while it has been applied in practice to a specific reduced rank regression model, see unpublished work by Drton and 
Plummer \citeyear{drton:plummer13} and for the case of Gaussian process regression \cite{mononen14}, 
beyond Watanabe's own implementation there has been no further exploration of 
the criterion. The focus of this is to assess its performance as an approximation of the evidence.

The paper is organised as follows, the key results and notation for power posteriors, necessary for WBIC, are presented in Section ~\ref{PPs}. Watanabe's WBIC is presented in 
Section ~\ref{WBIC}. Section~\ref{sec:theory:comparison} presents a theoretical comparison of WBIC and the power posterior approach.
The performance of WBIC compared to several competing methods is illustrated in Section ~\ref{Numerical} for four examples. 
The article is concluded with a brief discussion in Section ~\ref{Discussion}.

\section{Power posteriors}
\label{PPs}

Friel and Pettitt~\citeyear{FrielPettitt2008} propose the method of power posteriors, a path sampling type method, to evaluate the marginal likelihood (or evidence) 
in an application of the thermodynamic integration technique from statistical physics. Dating to Kirkwood~\citeyear{Kirkwood1935}, thermodynamic integration has a 
long history in the statistical physics literature. An in-depth background to thermodynamic integration and Bayes free energy
(also known as marginal likelihood) calculations 
for context specific statistical models is given by Chiput and Pohorille~\citeyear{ChiputPohorille2007}. In addition, the slow growth method of 
Bash \textit{et al.}~\citeyear{BashSinghLangridgeKollman1987} is a notable forerunner to the method of power posteriors. In the statistics literature the use of 
thermodynamic integration is detailed thoroughly by Neal~\citeyear{Neal1993} together with other techniques from statistical physics and furthermore by Gelman and 
Meng~\citeyear{GelmanMeng1998} and more recently by Friel and Pettitt~\citeyear{FrielPettitt2008}.

As in \cite{FrielPettitt2008}, for data $y$, parameters $\theta$ and temperature parameter $t\in [0,1]$, define the power posterior as the annealed distribution
\begin{equation}
	p(\theta|y,t) \propto f(y|\theta)^{t}p(\theta),
	\label{eq:powpost}
\end{equation}
which has normalising constant defined as
\begin{equation}
	z_t(y) = \int_{\theta}{f(y|\theta)^t p(\theta) d\theta}.
	\label{eq:zt}
\end{equation}
Throughout we assume that $p(\theta|y,t)$ is proper, so that $z_t(y)$ exists for all $t\in[0,1]$, in particular, this assumes a proper prior. 
Clearly, the evidence is realised when $t=1$, that is, $z_1(y) = p(y)$ and when $t=0$ the integration is over the prior with respect to $\theta$, thus $z_0(y) = 1$. In what follows 
we make use of the power posterior identity
\begin{equation}
	\log p(y) = \log \left\{ \frac{z_1(y)}{z_0(y)} \right\} = \int_{0}^{1}{\E_{\theta|y,t} \log{f(y|\theta)} }\;dt.
	\label{eq:powpost_identity}
\end{equation}
In fact, more generally one can express the log-evidence as
\begin{equation}
  \log p(y) = \E_{\theta,t|y} [\log f(y|\theta) / p(t)].
  \label{eqn:pp_iden}
\end{equation}
for some temperature prior $p(t)$. 
In practice the log-evidence is estimated, using a discretised temperature schedule, $t\in{[0,1]}$, $0=t_0<t_1,\dots,t_m=1$ and MCMC draws 
$\theta_j^{(i)}$ for $i = 1,2,\ldots,N$ from each power posterior $p(\theta|y,t_j)$, as
\begin{equation}
	\log p(y) \approx \sum_{j=1}^m \frac{(t_{j}-t_{j-1})}{2} \left( \E_{\theta|y,t_j} \log{f(y|\theta)} + \E_{\theta|y,t_{j-1}} \log{f(y|\theta)} \right).
	\label{eq:powpost_quadrature}
\end{equation}
Using a burn-in of $K<N$ iterations, $\E_{\theta|y,t_j} \log{f(y|\theta)}$ is estimated for a fixed $t_j$ by
\begin{equation}
	\E_{\theta|y,t_j} \log{f(y|\theta)} \approx \frac{1}{N-K}\sum_{j = K+1}^{N}{\log p(y|\theta_j^{(i)})}.
	\label{eq:estE}
\end{equation}

Alternatively, the updated power posterior estimate of Friel \textit{et al.}~\citeyear{FrielHurnWyse2013} employ a correction to the trapezoidal rule such that
\begin{eqnarray}
  \log p(y) \approx \sum_{j=1}^m \frac{(t_{j}-t_{j-1})}{2} \left( \E_{\theta|y,t_j} \log{f(y|\theta)} + \E_{\theta|y,t_{j-1}} \log{f(y|\theta)} \right) \nonumber \\ 
- \sum_{j = 1}^{m}\frac{(t_{j}-t_{j-1})^2}{12}\left(\V_{\theta|y,t_j} \log f(y|\theta) - \V_{\theta|y,t_{j-1}} \log f(y|\theta)\right),
\label{eq:powpost_quadrature_update}
\end{eqnarray}
where $\V_{\theta|y,t} \log f(y|\theta)$ is the variance of $\log f(y|\theta)$ with respect to the power posterior, $p(\theta|y,t)$. This approximation consistently out-performs 
the standard estimate with no additional computation cost. Indeed recent work by Oates \textit{et al.}~\citeyear{oates16} has shown that is possible to achieve further 
improvement through the use of control variates to efficiently estimate $\E_{\theta|y,t_j} \log{f(y|\theta)}$ and $\V_{\theta|y,t_j} \log{f(y|\theta)}$ for each temperature 
$t_j \in[0,1]$, at very little extra computational cost. Together with the numerical integration scheme (\ref{eq:powpost_quadrature_update}), the authors have shown 
that this can yield a dramatic improvement in the statistical efficiency of the estimate of the evidence. Finally we note that Hug \textit{et al.}~\citeyear{hug16} presents 
a further refinement of the power posterior approach which has shown to provide an improvement in the numerical integration over the temperature parameter.

\section{Widely applicable Bayesian information criterion}
\label{WBIC}

The widely applicable Bayesian information criterion (WBIC) \cite{Watanabe2013} promises to reduce the considerable computational burden that the method of power posteriors and 
indeed other evidence estimation methods suffer from. The key to WBIC is that there exists a unique temperature, say $t^* \in [0,1]$, such that, 
\begin{equation}
	\log p(y) = \E_{\theta|y,t^*} \log{f(y|\theta)}.
	\label{eq:WBIC}
\end{equation}
Hence, given this temperature $t^*$, only one Markov chain needs to be simulated at only one temperature value to estimate the evidence, using samples $\theta^{(i)}$ for 
$i = 1,2,\ldots,N$ from the power posterior $p(\theta|y,t^*)$ and equation (\ref{eq:estE}). 
The fact that equation~(\ref{eq:WBIC}) holds follows straightforwardly from the mean value theorem, which shows that there exists a particular $t^*$ such that
\begin{equation}
 \log p(y) = \frac{\log z_1(y) - \log z_0(y)}{1-0} = \frac{d}{dt} \log z_t(y) \bigg|_{t^*} = \E_{\theta|y,t^*} \log{f(y|\theta)}.
	\label{eq:WBIC2}
\end{equation}
Uniqueness of $t^*$ follows from the fact that $d^2 \log z_t(y) / dt^2 = \V_{\theta|y,t} \log{f(y|\theta)}$ and hence is strictly positive under standard regularity assumptions. 

In fact it is possible to provide an information theoretic interpretation of the optimal temperature $t^*$ in (\ref{eq:WBIC2}). Following
Vitoratou and Ntzoufras~\citeyear{vitoratou13}, it is straightforward to prove that $p_{t^*}$, the power posterior at the optimal temperature $t^*$, is equi-distant, in terms of 
Kullback-Leibler distance, from the prior and posterior. We can show this as follows.

Here, for brevity, we introduce the notation $p_t(\theta) = p(\theta|y,t)$ for all $t\in[0,1]$. 

\begin{lemma}
\label{lem:kl}
 The power posterior at the optimal temperature $t^*$ satisfies the identity
 \[
  KL(p_{t^*} || p_1) = KL(p_{t^*} || p_0).
 \]
\end{lemma}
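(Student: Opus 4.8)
The plan is to expand both Kullback--Leibler divergences in closed form in terms of the normalising constants $z_t(y)$ of~(\ref{eq:zt}) and the single expectation $\E_{\theta|y,t^*}\log f(y|\theta)$, and then to read off the identity directly from the defining property~(\ref{eq:WBIC}) of $t^*$. The calculation is short; the only thing to track carefully is the bookkeeping of the $\log z_t(y)$ terms and of the coefficients multiplying the expectation.

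First I would write the power posterior density as $p_t(\theta) = f(y|\theta)^t p(\theta)/z_t(y)$, so that $p_0$ is the prior (with $z_0(y)=1$, hence $p_0(\theta)=p(\theta)$) and $p_1$ is the posterior (with $z_1(y)=p(y)$). Taking logarithms of the two density ratios gives, pointwise in $\theta$,
\begin{align*}
 \log\frac{p_{t^*}(\theta)}{p_1(\theta)} &= (t^*-1)\log f(y|\theta) + \log z_1(y) - \log z_{t^*}(y), \\
 \log\frac{p_{t^*}(\theta)}{p_0(\theta)} &= t^*\log f(y|\theta) - \log z_{t^*}(y).
\end{align*}
Integrating each against $p_{t^*}$ then yields
\begin{align*}
 KL(p_{t^*} || p_1) &= (t^*-1)\,\E_{\theta|y,t^*}\log f(y|\theta) + \log z_1(y) - \log z_{t^*}(y), \\
 KL(p_{t^*} || p_0) &= t^*\,\E_{\theta|y,t^*}\log f(y|\theta) - \log z_{t^*}(y).
\end{align*}

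Subtracting the first of these from the second, the terms $\log z_{t^*}(y)$ cancel and the coefficients of the expectation combine to $1$, leaving
\[
 KL(p_{t^*} || p_0) - KL(p_{t^*} || p_1) = \E_{\theta|y,t^*}\log f(y|\theta) - \log z_1(y) = \E_{\theta|y,t^*}\log f(y|\theta) - \log p(y),
\]
which is zero by~(\ref{eq:WBIC}), equivalently by the mean value theorem argument in~(\ref{eq:WBIC2}). This proves the lemma. I do not expect any genuine obstacle here; the only point needing care is that the two divergences are finite, which is ensured by the standing assumption that every power posterior is proper together with the regularity conditions under which $\E_{\theta|y,t}\log f(y|\theta)$ exists. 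As a remark, the same computation also gives the converse: since $t\mapsto\E_{\theta|y,t}\log f(y|\theta)$ has derivative $\V_{\theta|y,t}\log f(y|\theta)>0$ and is therefore strictly increasing, $t^*$ is in fact the unique temperature at which this equi-distance property holds.
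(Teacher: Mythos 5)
Your proof is correct and follows essentially the same route as the paper: both reduce the equality of the two Kullback--Leibler divergences to the identity $\log p(y) = \E_{\theta|y,t^*}\log f(y|\theta)$, which is exactly the defining property of $t^*$ from~(\ref{eq:WBIC2}); you merely expand the densities explicitly through the normalising constants $z_t(y)$, whereas the paper works with the ratio $p_1(\theta)/p_0(\theta) = f(y|\theta)/p(y)$ directly. Your closing remark on uniqueness is a nice addition but is already covered by the paper's observation that $d^2\log z_t(y)/dt^2 = \V_{\theta|y,t}\log f(y|\theta) > 0$.
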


\begin{proof}
Using the definition of Kullback-Leibler distance, we can re-write the statement of this lemma as, 
 \begin{eqnarray*}
  \int_{\theta} p_{t^*}(\theta) \log \frac{p_{t^*}(\theta)}{p_1(\theta)}\; d\theta &=& \int_{\theta} p_{t^*}(\theta) \log \frac{p_{t^*}(\theta)}{p_0(\theta)}\; d\theta  \\
  \iff \int_{\theta} p_{t^*}(\theta) \log p_1(\theta)\; d\theta &=& \int_{\theta} p_{t^*}(\theta) \log p_0(\theta)\; d\theta  \\
  \iff \int_{\theta} p_{t^*}(\theta) \log \frac{p_1(\theta)}{p_0(\theta)}\; d\theta &=& 0 \\
  \iff \int_{\theta} p_{t^*}(\theta) \log \frac{f(y|\theta)}{p(y)}\; d\theta &=& 0 \\
  \iff \log p(y) &=& \int_{\theta} p_{t^*}(\theta) \log f(y|\theta)\; d\theta \\
  &=& \E_{\theta|y,t^*} \log{f(y|\theta)},
 \end{eqnarray*}
which holds from equation~(\ref{eq:WBIC2}).
\end{proof}

This result may prove useful as a basis for estimating the optimal temperature. However, one should note that both the posterior, $p_1(\theta)$ and the power posterior 
at the optimal temperature, $p_{t^*}$ are generally intractable, leaving a direct evaluation of this identity unavailable.  Indeed we will use this result in Section~\ref{sec:theory:comparison} for an 
idealised analysis of a comparison of the mean squared error arising from the identity in (\ref{eq:WBIC2}) and that arising from (\ref{eqn:pp_iden}).  

Clearly, finding this optimal temperature $t^*$ is a challenging task. The main contribution of Watanabe~\citeyear{Watanabe2013} is to show 
asymptotically, as the sample size $n\rightarrow \infty$, that $t^*\sim 1/\log(n)$. WBIC is thus defined as
\begin{equation}
 WBIC = \E_{\theta|y,t_{w}}\log{f(y|\theta)} \approx \log p(y),
	\label{eq:WBIC3}
\end{equation}
where $t_{w} = 1/\log(n)$.

Watanabe~\citeyear{Watanabe2013} introduced WBIC in the context of algebraic geometry where it is applied to solve the problem of singularity in the statistical models commonly 
encountered for models with latent variables or hierarchical structure including, mixture models, hidden Markov models, neural networks and factor regression models. In the case 
of singular models (models where the mapping from parameters to a probability distribution is not one-to-one and where the Fisher information matrix is not always positive definite) the standard 
Bayesian information criterion (BIC or Schwarz criterion) \cite{Schwarz1978}  approximation to the marginal likelihood is known to be poor \cite{ChickeringHeckerman1997}. 
As before, the literature is lacking a comprehensive evaluation of the performance of WBIC in both regular and singular settings at finite $n$; this is our contribution below.

\section{A theoretical comparision of WBIC and power posteriors}
\label{sec:theory:comparison}

In this section we present a theoretical comparison of the mean squared error resulting from idealised implementations of both WBIC and the power posterior approach. 
Following (\ref{eqn:pp_iden}), thermodynamic integration or power posteriors (PP) is based on the identity 
$$
\log p(y) = \E_{\theta,t|y} [\log f(y|\theta) / p(t)].
$$
Here $p(t)$ is arbitrary, but for this comparison we suppose we have access to an an optimal choice (that minimises the Monte Carlo variance) and is given by
\begin{eqnarray*}
p^*(t) & = & \arg\min_{p(t)} \mathbb{V}_{\theta,t|y}[\log (p(y|\theta)) / p(t)] \\
& \propto & (\mathbb{E}_{\theta|t,y} [(\log p(y|\theta))^2])^{1/2}
\end{eqnarray*}
as shown in Calderhead and Girolami \citeyear{calder:giro09}.
It has been shown in numerous studies that this optimal choice can be well approximated using power law heuristics.

As before, the recently proposed WBIC, is based on the identity 
$$
\log p(y) = \mathbb{E}_{\theta|t^*,y}[\log p(y|\theta)]
$$ 
where $t^*$ is the solution to $\text{KL}(p_{t^*} || p_0) = \text{KL}(p_{t^*} || p_1)$ where $p_t$ denotes the power posterior $p(\theta|t,y)$. For this comparison we 
suppose we have access to $t^*$. 

We note here that WBIC is not a special (or degenerate) case of PP, but there are some visual similarities.
WBIC actually uses more information on the smoothness of the integrand, compared to PP, so from this simple perspective WBIC can be expected to perform better in principle.

Being expressed as expectations, both identities can, in principle, be used to produce unbiased estimates of $\log p(y)$ via Monte Carlo.
The natural theoretical question to ask is which Monte Carlo estimator has the lower variance.
An instructive analytical analysis of this idealised case is provided below.

\begin{case}
Consider the following simple example, taken from Friel and Pettitt \citeyear{FrielPettitt2008} and considered elsewhere by \shortcite{GelmanHwangVehtari2013}. Suppose data $y=\{y_i:\: i=1,\dots,n\}$ are 
independent and $y_i \sim N(\theta,1)$. Assuming an informative prior $\theta \sim N(m,v)$, this leads to a power posterior, $\theta|y,t \sim N(m_t, v_t)$ where
\begin{displaymath}
  m_t = \frac{nt\bar{y}+ m/v}{nt+ 1/v} \:\:\:\mbox{and}\:\:\:
  v_t = \frac{1}{nt+1/v}.
\end{displaymath}
It is straightforward to show that
\begin{equation}
  \E_{\theta|y,t} \log{f(y|\theta)}  = -\frac{n}{2}\log{2\pi} - \frac{1}{2}\sum_{i=1}^n (y_i-\bar{y})^2  
					- \frac{n}{2}\frac{(m-\bar{y})^2}{(vnt+1)^2} - \frac{n}{2}\frac{1}{(nt+1/v)}.
  \label{eq:toy}
\end{equation}
Moreover, it is easy to show that
\begin{equation}
 \log p(y) =-\frac{n}{2} \log{2\pi} - \frac{1}{2} \log{\frac{v}{v^*}} 
					- \frac{1}{2}\left[ \sum_{i=1}^{n} y_i^2 + \frac{m^2}{v} - \frac{(n\bar{y}+m/v)^2}{n+1/v} \right],
 \label{eqn:log_evid}
\end{equation}
where $v^* = \frac{1}{n + 1/v}$ is the posterior variance of $\theta$. 
This example is useful since not only does it allow analytical evaluation of (\ref{eq:toy}) and (\ref{eqn:log_evid}), but it also possible to analytically find the
optimal temperature $t^*$ as the solution to the identity in Lemma~\ref{lem:kl}. 

\end{case}


\begin{figure}[t]
\centering
\includegraphics[width = 0.32\textwidth,clip,trim = 4cm 9cm 4cm 9cm]{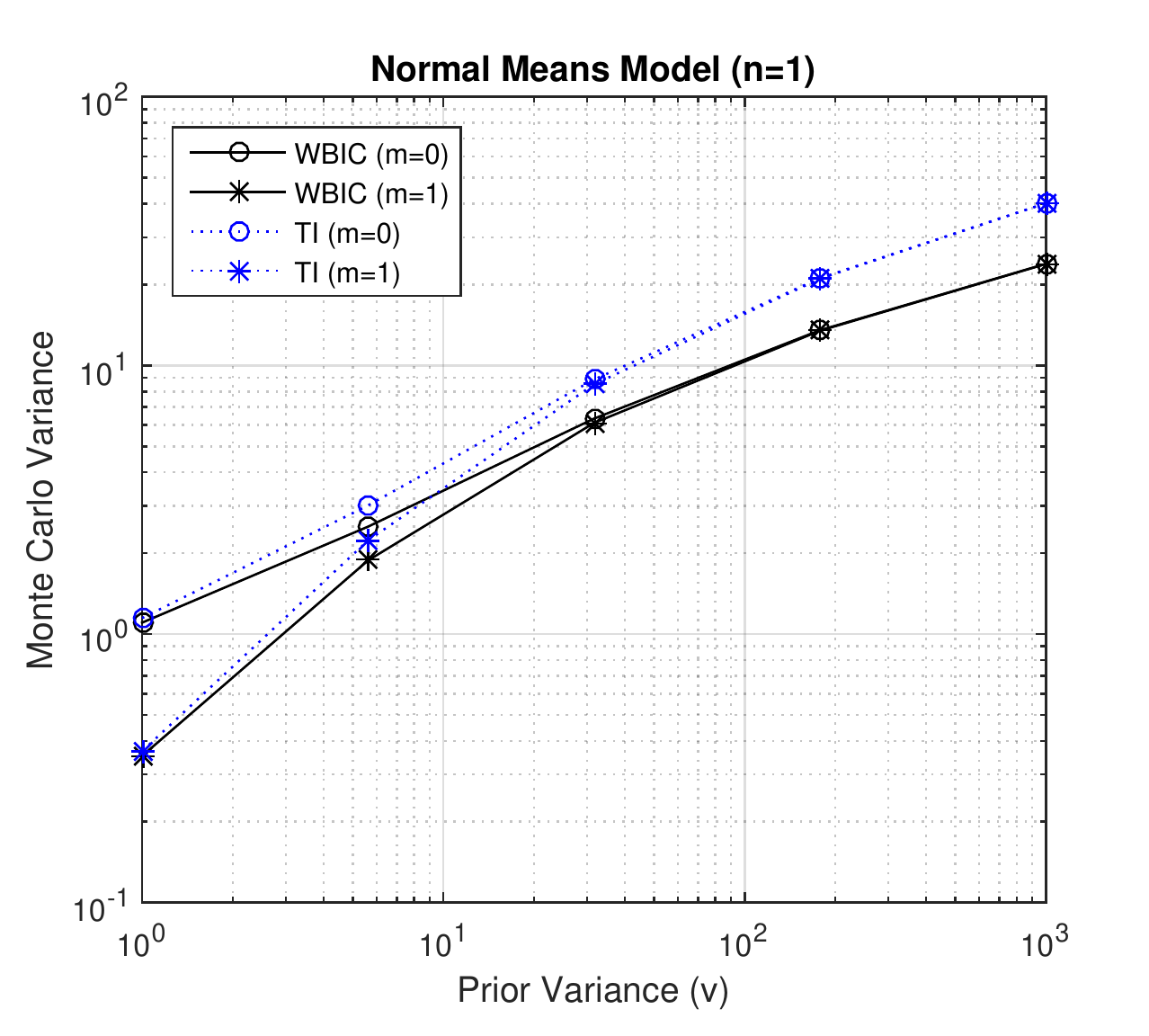}
\includegraphics[width = 0.32\textwidth,clip,trim = 4cm 9cm 4cm 9cm]{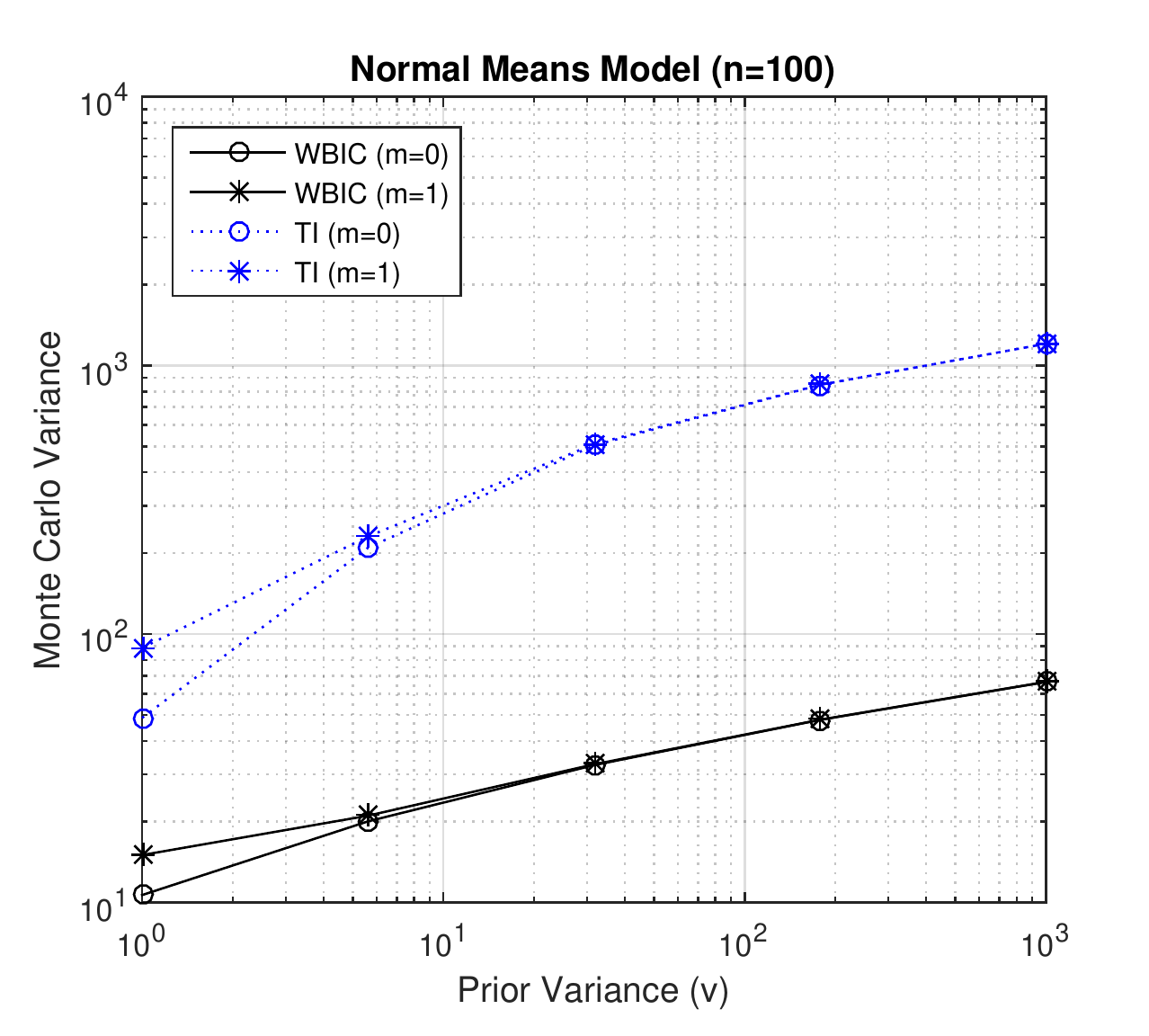}
\includegraphics[width = 0.32\textwidth,clip,trim = 4cm 9cm 4cm 9cm]{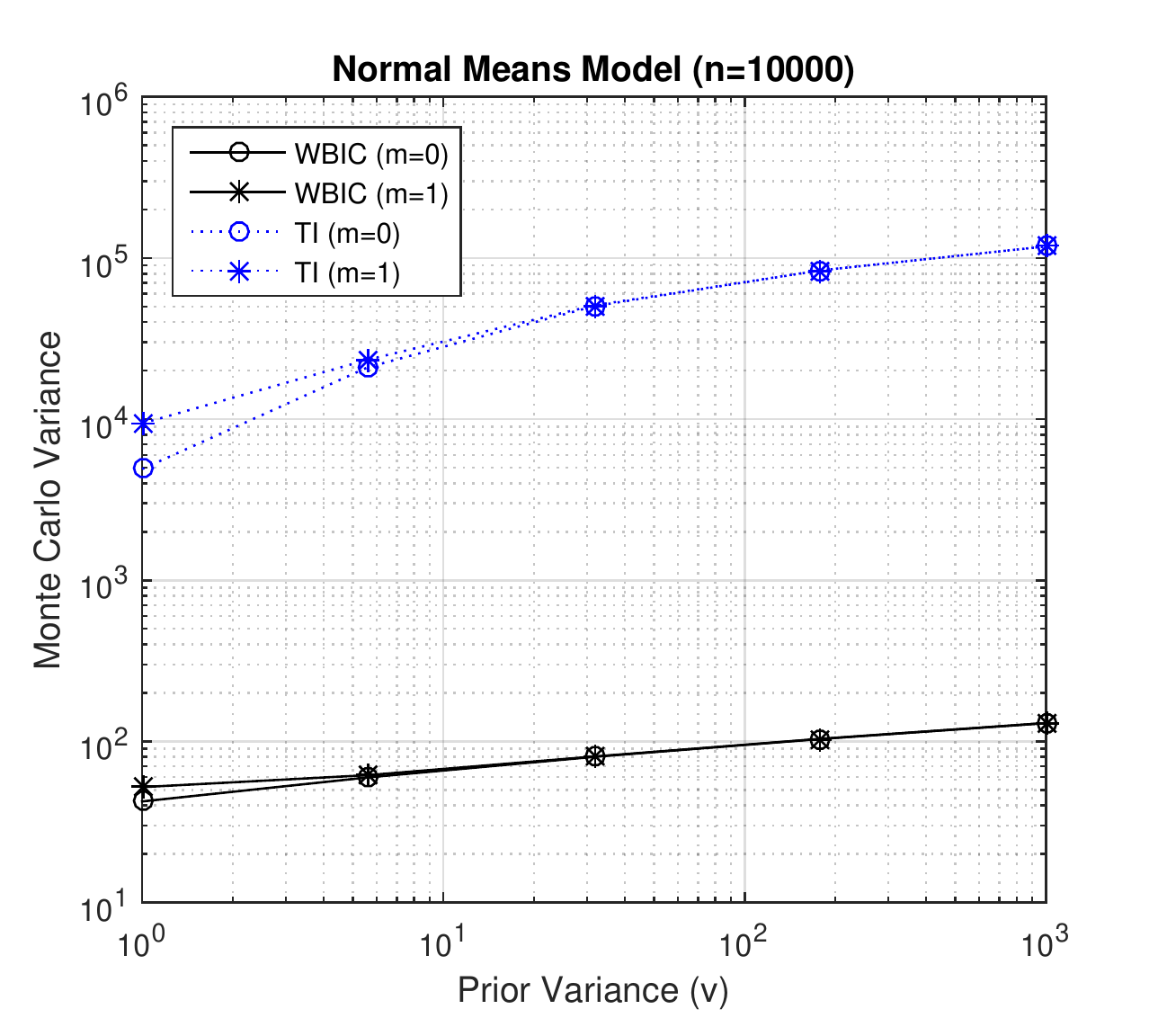}
\caption{Normal means model. Comparison of (idealised) thermodynamic and WBIC estimator variances.
Left: $n = 1$. Middle: $n = 100$. Right: $n = 10000$.
In each panel we show the case of prior mean $m \in \{0,1\}$ for varying values of the prior variance $v$.
Data $y$ were generated from the model with $\theta = 0$.}
\label{fig:example}
\end{figure}

Figure \ref{fig:example} displays the Monte Carlo variances of the (idealised) thermodynamic and WBIC estimators.
Data $y$ were generated from the model with $\theta = 0$ where a priori, $\theta\sim N(m,v)$ and we present results for $m \in \{0,1\}$ and for various values of $v$. 
For this example,
\[
 \mathbb{V}_{\theta|t^*,y}[ \log p(y|\theta)] < \mathbb{V}_{\theta,t|y}[\log (p(y|\theta)) / p^*(t)],
\]
for all combinations of $(y,m,v)$. 

For uninformative data ($n$ small; left panel) the PP and WBIC estimators have essentially equal Monte Carlo variance.
For informative data ($n$ large; right panel) the PP estimator variance can be orders of magnitude larger than the WBIC variance at all values of the prior variance $v$.
In all cases the Monte Carlo variance of PP increases relative to WBIC as the prior variance $v$ goes to infinity.
This reflects the fact that PP has to evaluate an expectation at $t = 0$, while the WBIC estimator deals with $t^* > 0$ (although $t^* \rightarrow 0$ as $n \rightarrow \infty$).

For this case study an optimally-tuned WBIC method has lower theoretical mean-squared error than an optimally-tuned PP method (again, ignoring practical details for the moment). 
The normal means model, whilst only one model, can be considered as a caricature of ``regular'' inference problems.
This example thus suggests that in scenarios with either informative data or very vague priors, for regular models, PP can severely under-perform an optimally configured WBIC estimator.

However, there are several factors that are relevant in practice that are not captured by the idealised formulation above:
\begin{enumerate}
\item The optimal temperature distribution $p^*(t)$ for thermodynamic integration is generally unavailable in closed-form.
\item It is common to use quadrature to integrate out $t$ in the thermodynamic integral, which can be more efficient than joint Monte Carlo sampling of $(\theta,t)$ but induces a bias into the estimate.
\item The $t^*$ in WBIC is unknown and a guess of $t_w = 1 / \log(n)$ is used in practice which may induce a significant error.
\item Several extensions of the power posterior approach have been developed which have improved the statistical efficiency of the estimator, \shortcite{FrielHurnWyse2013} and \shortcite{oates16}. 
Similar development and extension may also be possible for WBIC. 
\end{enumerate}

The above considerations motivate an empirical investigation of the relative merits of the two approaches which we now explore. 

\section{Empirical examples}
\label{Numerical}

We consider four examples where in all cases the motivation is to assess the performance of WBIC as an evidence approximation. The first model is one for which it is possible to calculate 
both $\log p(y)$ and WBIC analytically. The second model allows exact calculation of $\log p(y)$ only. The third model, logistic regression, is one where neither the log evidence nor 
WBIC can be evaluated exactly. The final model is a finite Gaussian mixture model, a singular model that WBIC was designed to handle and where neither the evidence nor WBIC can be evaluated exactly.
%
In all four models, the approximation $t_w = 1/\log(n)$ is used.

\subsection{A tractable normal model}
\label{SimpleModel}

\begin{case}
Consider again the tractable normal model from Section~\ref{sec:theory:comparison}.
Here $100$ datasets were simulated for each of the following values of $n=50,100,1000,10000$. Within each dataset $y_i\sim N(0,1)$ and a priori, $\theta\sim N(0,10)$. 
For each value of $n$, the optimal temperature $t^*$ was found by solving the identity in Lemma~\ref{lem:kl}. In addition, the temperature $t_w=\frac{1}{\log{n}}$ corresponding 
to WBIC was also recorded. 
\end{case}

The results are displayed in 
Figure~\ref{fig:tract_norm}(a). Clearly, as $n$ increases, as expected, the optimal temperature becomes closer to the temperature corresponding to WBIC. However, for relatively 
small values of $n$, there is typically a large discrepancy between $t_w$ and $t^*$. Through closer inspection of the curve of expected log deviances and the 
temperature, see Figure~\ref{fig:pine1} for example, it is clear that overestimates of the optimal temperature will not necessarily translate to large overestimates of the log 
evidence as the curve is reasonably flat as the temperature approaches one. 
\begin{figure*}[tbp] 
	\begin{center}
\begin{tabular}{cc}
		\includegraphics[angle = 270, scale=0.45]{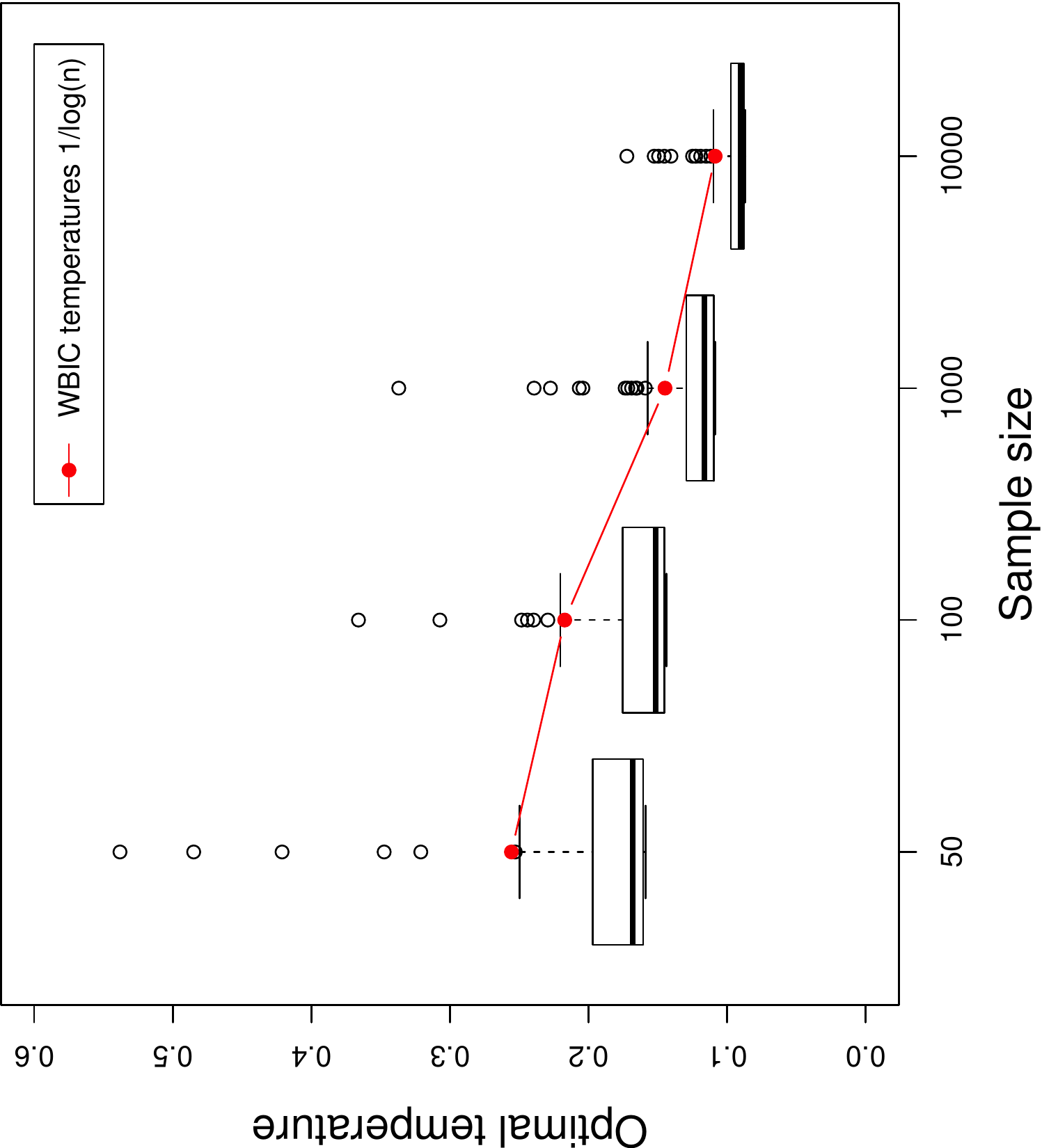}
		\includegraphics[angle = 270, scale=0.45]{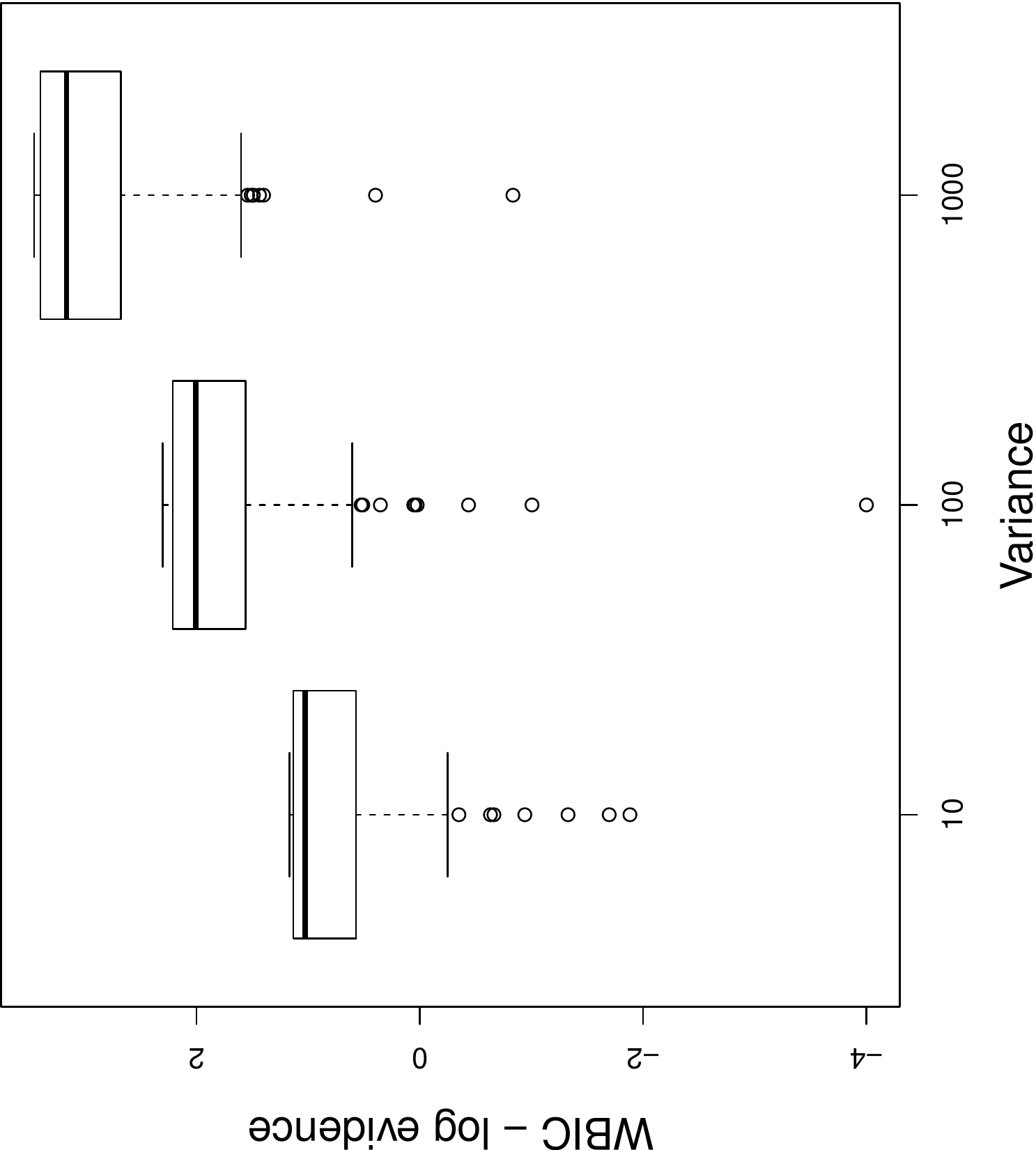}
\end{tabular}
		\caption{Tractable normal model: \textbf{(a)} Box-plots of the sample distribution of true temperature for $100$ datasets for various sizes $n$. Also displayed 
on each box-plot are the temperature corresponding to the WBIC estimate of the log evidence. \textbf{(b)} Box-plots of the difference between WBIC and the log evidence for $100$ 
independent datasets with $n=50$ observations and each	with different prior variances. It is clear that the difference grows larger as the variance grows and that the WBIC typically 
overestimates the log evidence}
		\label{fig:tract_norm}
	\end{center}
\end{figure*}
In Figure~\ref{fig:tract_norm}(b) for each value of $n$, the prior variance $v$ is now set as $10,100,1000$ with the data simulated as $y \sim N(0,1)$ as in (a) and it can be seen 
that, as expected, larger prior variances result in poorer WBIC estimates of the log evidence. Interestingly, WBIC tends to overestimate the log evidence.

Consider now the case of unit information priors \cite{KassWasserman1995} and considered subsequently in the context of the Bayesian information criterion (BIC or Schwarz criterion) 
by Raftery \citeyear{Raftery1999} and Volinsky and Raftery \citeyear{VolinskyRaftery2000} in sociology and survival models, respectively. A unit information prior represents the 
amount of information contained in one observation of the data, such priors can be quite informative and are used here to illustrate the applicability of WBIC to this model. 

In the present model, correct specification of the mean of a unit information prior, here $\theta \sim N(m,1)$, was vital to the performance of WBIC. Figure~\ref{fig:Comparem0m1} 
illustrates WBIC plotted against $\log p(y)$ for data simulated from $N(0,1)$ of size $n = 10000$ and a prior mean of $m=0$ or $m=1$ with unit variance in either case. The WBIC 
approximation to $\log p(y)$ is particularly bad for the case where $m=1$ (and this effect increases with sample size).  
\begin{figure}%
	\begin{center}
		\includegraphics[angle = 270,scale=0.65]{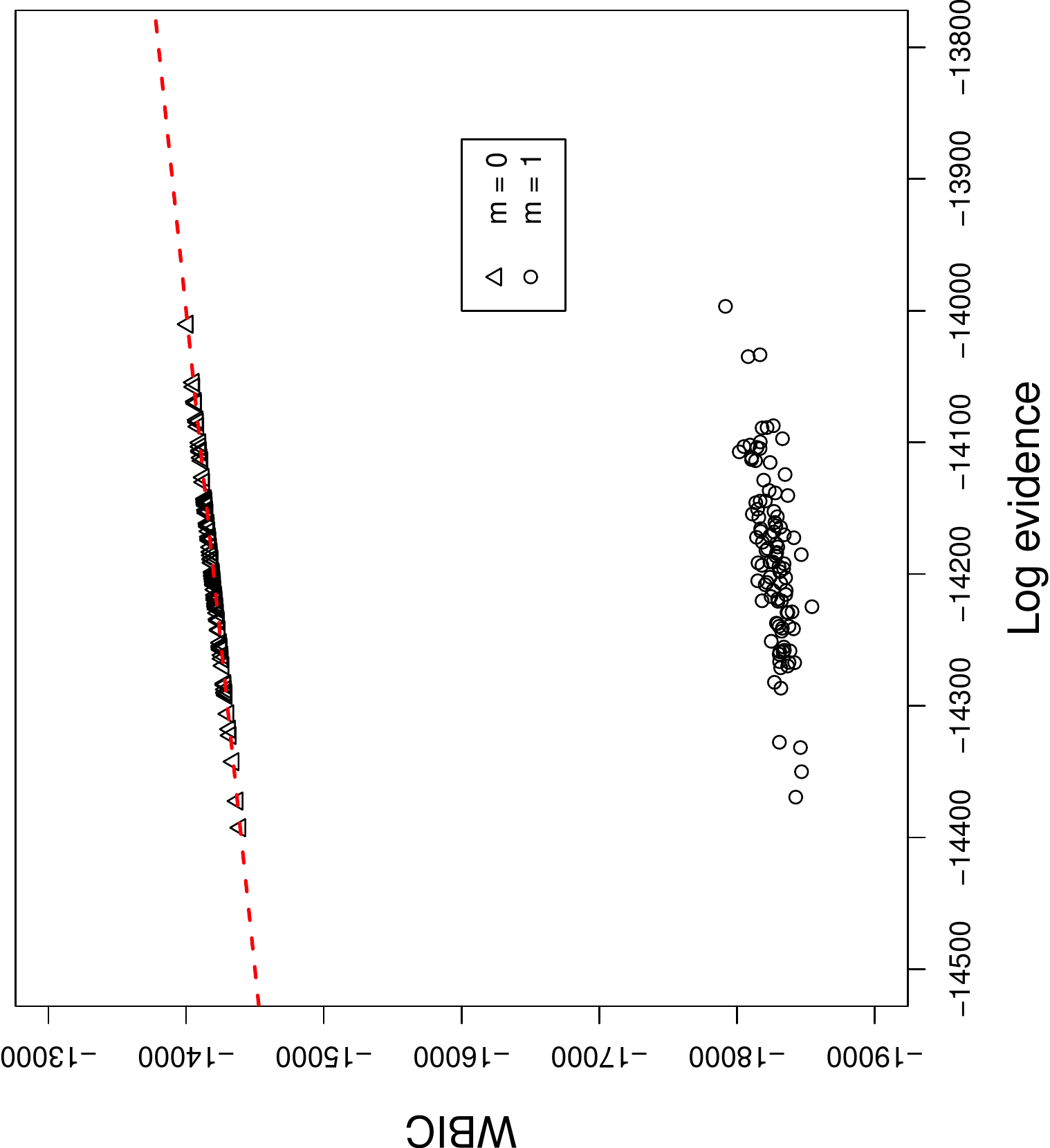}
	\end{center}
	\caption{Tractable normal model: WBIC against $\log p(y)$ for a mis-specified prior mean. One hundred datasets of size $n=1000$ are simulated from $y \sim N(0,1)$ and WBIC 
is evaluated at $t = \frac{1}{\log n}$ using the prior $\theta \sim N(m,1)$ where $m$ is $0$ or $1$. Note the stark difference in the log evidence and WBIC for the mis-specified model. 
The line where WBIC $= \log p(y)$ is marked on the figure}%
	\label{fig:Comparem0m1}%
\end{figure}

Therefore the question arises as to the appropriate prior mean for a unit information prior in this circumstance. The data informed prior $\theta \sim N(\bar{y},1)$, or equivalently define $\widetilde{y} = y - \bar{y}$ and the prior $\theta \sim N(0,1)$, is one obvious candidate for this. Though by this correction information about the mean is now wholly dependent on the data. 

An interesting observation can be made for fixed $n$, mean corrected data and the unit information prior $\theta \sim N(0,1)$, the difference between $t_w$ and $t^*$ is constant for every simulated dataset. That is, the simulation produces a deterministic result. Similarly, the difference between WBIC and $\log p(y)$ is also deterministic for every simulated dataset.

In Figure~\ref{fig:tempcomparison}, the optimal temperature $t^{*}$ satisfying equation (\ref{eq:WBIC}) is plotted against the temperature, $1/\log(n)$, for 
datasets of size $n = 3,4,5,\ldots,50$$,60,70,\ldots,100000$; a WBIC estimate with $t_w=1/\log(n)$ is of course not suitable for $n = 1,2$ and $t \in [0,1]$. 
The biggest differences occur for small $n$. 

\begin{figure*}%
	\begin{center}
		\begin{tabular}{cc}
				\includegraphics[angle = 270,scale=0.45]{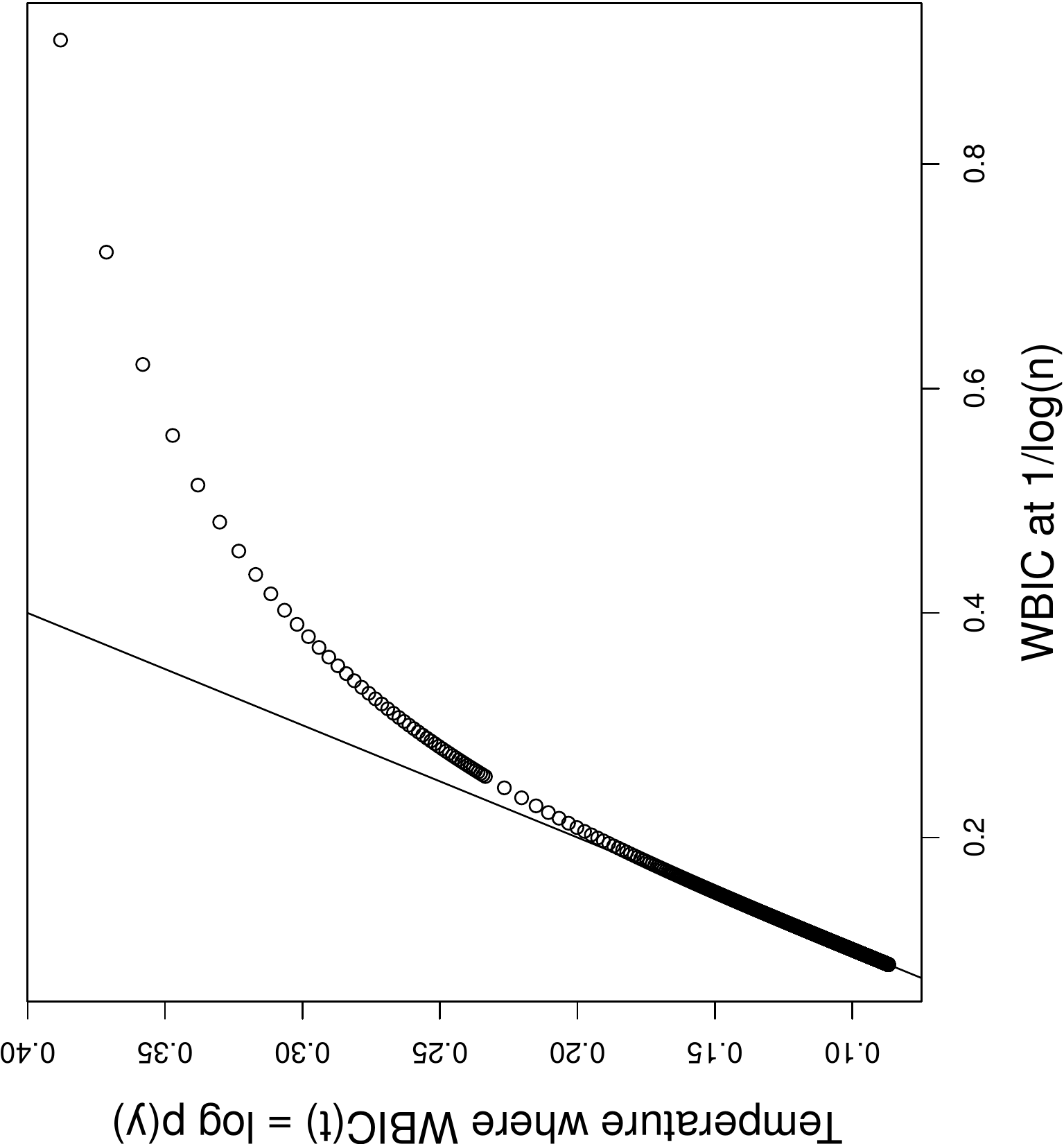} &
				\includegraphics[angle = 270,scale=0.45]{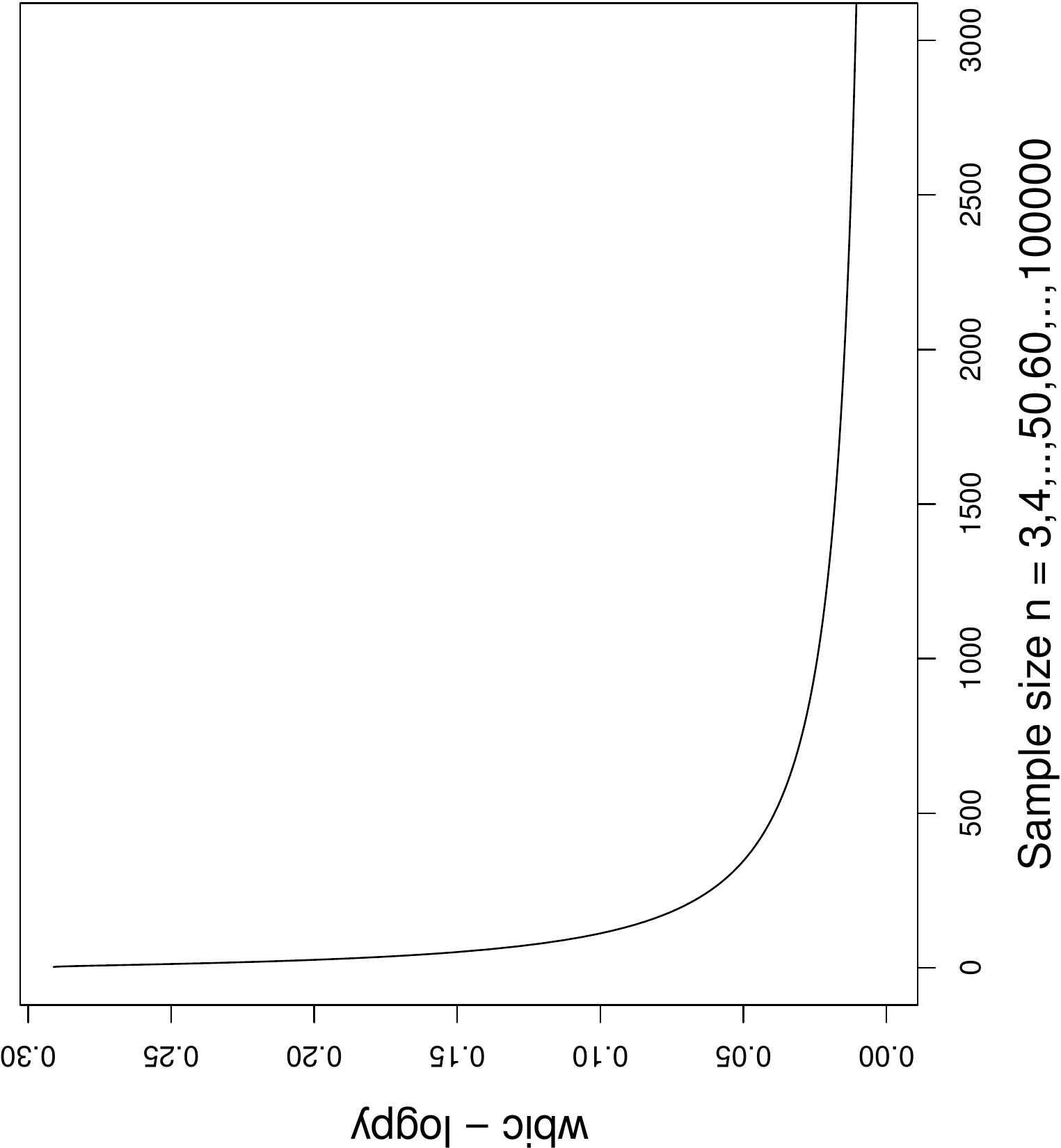} \\
		 (a) & (b) \\
		\end{tabular}
	\end{center}
	\caption{Tractable normal model: In both plots the data are mean corrected and a unit information prior is used, that is $\theta \sim N(0,1)$, and the sample 
	size varies from $n = 3$ to $n=100000$. \textbf{(a)} For each $n$, the temperatures are compared between the optimal temperature $t^*$ , (\ref{eq:WBIC}), plotted against $t = \frac{1}{\log(n)}$. 
Smaller values of $n$ incur a bigger difference in the two temperatures and the relationship is surprisingly regular as $t$ becomes large and the two temperatures become equal. 
The line $t = t^{*}$ is included in the figure. \textbf{(b)} The difference between WBIC and $\log p(y)$ is plotted against the sample size 
(up to $n = 3000$). Even for relatively small $n$, WBIC is accurate.}%
	\label{fig:tempcomparison}%
\end{figure*}


It is reassuring that the method performs admirably for the case of mean corrected data and a unit information prior. Though again, WBIC tends to slightly overestimate the log evidence in this case.

%
%
%

\subsection{Non-nested linear regression}
\label{LinearRegression}

Here we consider using WBIC to compute a Bayes factor and compare the results to existing methods to estimate the marginal likelihoods. 

\begin{case}
The data considered in this section describe the maximum compression strength parallel to the grain $y_i$, density $x_i$ and density adjusted for resin content $z_i$ for $n = 42$ 
specimens of {\it radiata} pine. Given the investigation of the tractable normal model, Sect.~\ref{SimpleModel}, WBIC is not expected to perform particularly well with such a small 
sample size though. These data originate from \cite{Williams1959}. It is wished to determine whether the density or the resin-adjusted density is a better predictor of compression 
strength parallel to the grain. With this in mind, two Gaussian linear regression models are considered;
\begin{equation}
	\begin{array}{lccc}
		\mbox{Model 1:} & y_i = \alpha + \beta(x_i - \bar{x}) + \epsilon_i, & \epsilon_i \sim N(0,\tau^{-1}), &\\
		\mbox{Model 2:} & y_i = \gamma + \delta(z_i - \bar{z}) + \eta_i, & \eta_i \sim N(0,\lambda^{-1}), & 
	\end{array}
\end{equation}
for $i = 1,\dots,n$. Under an informative set-up, the priors assumed for the line parameters $(\alpha,\beta)'$ and $(\gamma,\delta)'$ had mean $(3000,185)'$ with precision 
(inverse variance-covariance) $\tau Q_0$ and $\lambda Q_0$ respectively where $Q_0 = \mbox{diag}(r_0,s_0)$. The values of $r_0$ and $s_0$ were taken to be $0.06$ and $6$, respectively. 
A gamma prior with shape $a_0 = 6$ and rate $b_0 = 4\times300^2$ was taken for both $\tau$ and $\lambda$. These prior assumptions are broadly similar to the priors assumed for 
this data in other analyses. See for example \cite{FrielPettitt2008}.
\end{case}

It is possible to compute the exact marginal likelihood for both of these models due to the prior assumption that the precision on the mean of the regression line parameters is 
proportional to the error precision. For example, the marginal likelihood of 
Model 1 is given by
\begin{eqnarray}
	p(y) =  p^{-n/2} &b_0^{a_0/2}&\frac{\Gamma\left\{(n+a_0)/2 \right\}}{\Gamma\left\{ a_0/2\right\}} \nonumber \\
								&\times& \frac{|Q_0|^{1/2}}{|M|^{1/2}} \left(y' R y + b_0 \right)^{-(n+a_0)/2}
\end{eqnarray}
where $y  = (y_1,\dots,y_n)'$, $M = X'X + Q_0$ and $R  = I - X M^{-1} X'$ with the $i$th row of $X$ equal to $(1\,\,\, x_i)$ and $I$ is the $2\times 2$ identity matrix. 

The exact value of the Bayes factor of Model $2$ over Model $1$ is given in Table~\ref{tab:comparison} to show a comparison 
with other approaches to estimating the evidence and Bayes factor. 
This example was examined in detail in \cite{FrielWyse2012} and we refer the reader to this paper for precise details of how these methods were
implemented. The key 
point to take from this is that WBIC is reasonably competitive with the other methods, but at a significantly reduced computational overhead cost.

\begin{table}
	\begin{center}
		\begin{tabular}{l c r}
			\hline\hline
			Method \qquad & \qquad $\mbox{mean}(BF_{21})$ \qquad& \qquad$({\mbox{S.E.}}(BF_{21}))$\qquad\\
			\hline
			Exact 												& 4553.65 & $-$					\\
			Laplace approximation 				& 4553.63 & $-$					\\
			Laplace approximation MAP 		& 4553.74 & (1.05)	 		\\
			Harmonic mean estimator 			& 3718.57 & (909.17) 		\\
			Chib \& Jeliazkov's method 		& 4553.72 & (0.66)			\\
			Annealed importance sampling 	& 4542.43 & (140.27)		\\
			Power posteriors 							& 4535.11 & (74.75)			\\
			Nested sampling 							& 6817.52 & (6980.82)		\\
			WBIC 													& 4469.11 & (372.15) 		\\
			\hline
		\end{tabular}
	\end{center}
	\caption{Radiata Pine: Comparison of different approaches to estimating the Bayes factor of Model $2$ over Model $1$ based on $20$ runs of each algorithm for the Radiata Pine data.} 
	\label{tab:comparison}
\end{table}

Figure~\ref{fig:pine1} plots the expected log deviance with respect to $p(\theta|y,t)$ versus the temperature $t$. A fine grid of discrete temperatures in the range $[0,1]$ is 
employed and $\E_{\theta|y,t^*}[\log{f(y|\theta)}]$ is estimated for each $t_i\in [0,1]$ using a long MCMC run targeting the power posterior $p(\theta|y,t_i)$  The vertical 
line on the left hand side corresponds to the WBIC temperature $t_w =\frac{1}{\log(42)}=0.267$. The vertical line on the left hand side plots the temperature ($t^*\approx 0.19$) 
corresponding the true value of the log evidence. 
\begin{figure}[tbp] 
	\begin{center}
		\includegraphics[scale=0.65]{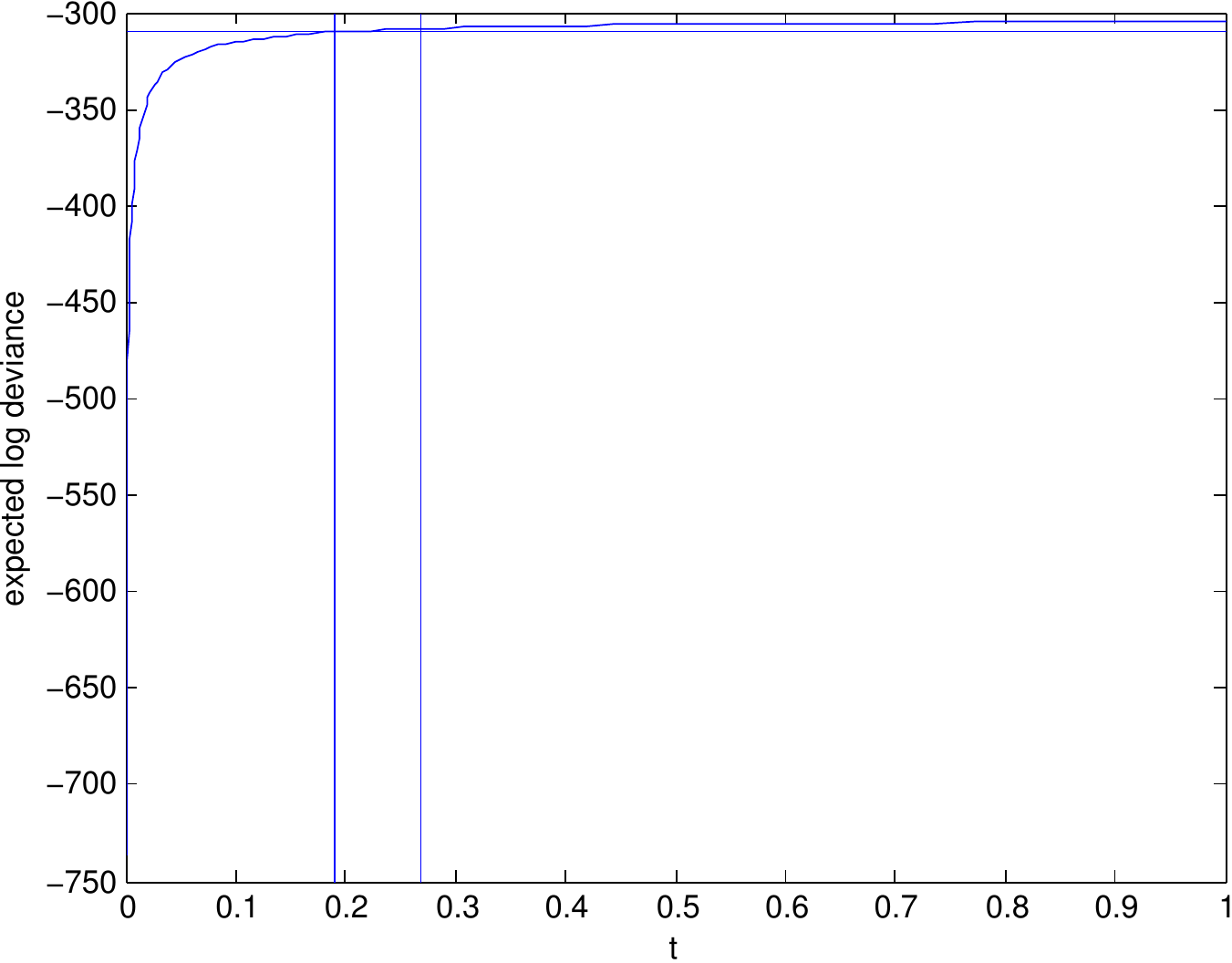}
		\caption{Pine data: Expected log deviance vs temperature. The vertical line on the left shows the temperature corresponding to the true evidence. The vertical line 
on the right corresponds to the temperature at the WBIC estimate of the evidence.}
		\label{fig:pine1}
	\end{center}
\end{figure}

Table~\ref{tab:pine_WBIC} shows the systematic bias in the estimates of the log evidence using WBIC, based on $20$ independent MCMC runs. 
\begin{table}
	\begin{center}
		\begin{tabular}{l l l}
			\hline
			& Model $1$ & Model $2$ \\
			True $\log p(y|m)$ & $-310.1283$ & $-301.7046$ \\
			mean WBIC & $-308.3390$ & $-299.8437$ \\
			s.e WBIC & $0.01$ & $0.01$ \\
			\hline
		\end{tabular}
	\end{center}
	\caption{Radiata pine data: WBIC estimate of the marginal likelihood for Model $2$ and Model $1$ compared to the true value of the log marginal likelihood for each method, based on $20$ independent runs.} 
	\label{tab:pine_WBIC}
\end{table}


We now consider the same evidence comparisons as those made in Table \ref{tab:pine_WBIC} but under a unit information prior formulation. In this regression model we re-parameterise the hyper-parameters as
\begin{equation*}
	(\alpha, \beta)' = (\gamma, \delta)' = \left[(X_1'X_1)^{-1}X_1'y + (X_2'X_2)^{-1}X_2'y\right]/2' 
\end{equation*}
which is very similar to the previous values of (3000,185)' considered above. The precision matrix $Q_0$ is now defined as
\begin{equation*}
	Q_0 = n\left[(X_1'X_1)^{-1} + (X_2'X_2)^{-1} \right]/2 \approx 
	\begin{pmatrix}
		3528 	& 0 \\
		0 		& 72945.73
	\end{pmatrix},
\end{equation*}
where $p_1 = p_2 = 2$ is the number of covariates in each model's design matrix. The variance parameters $\lambda$ and $\tau$ share hyper-parameters
\begin{equation}
	a_0 = 1, \hspace{20pt} b_0 = \frac{1}{n}y'\bar{R}y = 9701264,
\end{equation}
where $\bar{R} = (R_1 + R_2)/2$, $R_i = I_n - X_iM_i^{-1}X_i'$ and $M_i = X_i'X_i + Q_0$.

The evidence estimates under the unit information prior parameters are presented in Table \ref{tab:logEvidcomparison_unit_information1}. The values in the table are found from $20$ runs under each method. The harmonic mean and nested sampling estimates do not perform as poorly as in the previous set-up from Table \ref{tab:comparison} and the WBIC seems to be comparable with these methods, which are not significantly different from the true evidence values for both models. The standard error estimates for all models is markedly reduced from those found for the vague prior $BF_{21}$, though still unreliably high for the harmonic mean and nested sampling estimates. The $BF_{21}$ for WBIC is unsatisfactory however given the standard error estimate. Recall there are $n=42$ observations for these data and as seen with the tractable normal model, this seems too small for the WBIC estimate to perform well.

\begin{table}
	\centering
\resizebox{\columnwidth}{!}{
	\begin{tabular}{lrrrrrr}
		\hline\hline
		Method 							 					& $\mbox{mean}(\log p(y|M_1))$ &(S.E) & $\mbox{mean}(\log p(y|M_2))$ &(S.E)	& $\mbox{mean}(BF_{21})$ &(S.E.)  \\
		\hline                                                    
		Exact 												& -327.23 	& (--)         						& -324.93 &(--)                     	& 9.97 	& (--)             				\\
		Laplace approximation					& -327.23 	&(--)        							& -324.93 & (--)                			& 9.98 	& (--)   					 				\\
		Laplace approximation MAP			& -327.24 	&	(0.00012)    						& -324.93 & (0.000079)             		& 9.99  &(0.0015)									\\
		Harmonic mean estimator				& -331.09 	& (0.92)       						& -328.96 & (0.98)                 		& 21.24 &(43.67)									\\
		Chib \& Jeliazkov's method		& -327.23 	& (0.00014)    						& -324.93 & (0.000087)             		& 9.99  &(0.0018)									\\
		Annealed importance sampling  & -327.23 	& (0.038)      						& -324.95 & (0.034)                		& 9.77  &(0.40)										\\
		Power posteriors          		& -327.23 	& (0.017)      						& -324.93 & (0.019)                   & 9.99  &(0.26)										\\
		Nested sampling          			& -328.26 	& (1.49)       						& -326.29 & (1.48)                    & 31.55 &(62.25)									\\
		WBIC								         	& -331.04 	& (0.12)       						& -329.54	& (0.15)                    & 4.54  &(0.82)										\\
		\hline
	\end{tabular}
}
	\caption{Radiata Pine: Estimated log marginal likelihoods for each model and corresponding Bayes factors for each method under a unit information prior. These estimates are based on $20$ runs of the algorithm} 
	\label{tab:logEvidcomparison_unit_information1}
\end{table}

\subsection{Logistic regression models}
\label{logistic}

\begin{case}
Here we consider the Pima Indians dataset. These data contain instances of diabetes and a range of possible diabetes indicators for $n = 532$ Pima Indian women aged $21$ years or over. 
There are seven potential predictors of diabetes recorded for this group; number of pregnancies (\texttt{NP}); plasma glucose concentration (\texttt{PGC}); diastolic blood pressure 
(\texttt{BP}); triceps skin fold thickness (\texttt{TST}); body mass index (\texttt{BMI}); diabetes pedigree function (\texttt{DP}) and age (\texttt{AGE}). This gives $129$ potential 
models (including a model with only a constant term). Diabetes incidence ($y$) is modelled by the likelihood
\begin{equation}
	f(y|\theta) = \prod_{i=1}^n p_i^{y_i}(1-p_i)^{1-y_i}
\end{equation}
where the probability of incidence for person $i$, $p_i$, is related to the covariates (including constant term) $x_i = (1,x_{i1},\dots, x_{id})'$ and the parameters 
$\theta = (\theta_0,\theta_1,\dots,\theta_d)'$ by
\begin{equation}
	\log\left(\frac{p_i}{1-p_i}\right) = \theta' x_{i}
\end{equation}
where $d$ is the number of explanatory variables. An independent multivariate Gaussian prior is assumed for the elements of $\theta$, so that
\begin{equation}
	p(\theta) =  \left(\frac{\tau}{2p}\right)^{d/2} \exp\left\{-\frac{\tau}{2}\theta'\theta \right\}.
\end{equation}
The covariates were standardized before analysis. 
\end{case}

A long reversible jump run \cite{Green1995} revealed that the two models with the highest posterior probability were
\begin{align}
		\mbox{Model 1: } \mbox{\texttt{logit(p)}} &= 1 + \mbox{\texttt{NP}} + \mbox{\texttt{PGC}}	+ \mbox{\texttt{BMI}} + \mbox{\texttt{DP}} \nonumber \\
		\mbox{Model 2: } \mbox{\texttt{logit(p)}} &= 1 + \mbox{\texttt{NP}} + \mbox{\texttt{PGC}}	+ \mbox{\texttt{BMI}} + \mbox{\texttt{DP}} + \mbox{\texttt{AGE}}.	\label{eq:pima}
\end{align}
This reversible jump algorithm assumed a non-informative value of $\tau=0.01$ for the prior on the regression parameters. For this value of $\tau$ we carried out a reduced reversible 
jump run restricting to jumps only between these two models. The prior probabilities of the models were adjusted to allow for very frequent jumps (about 29\%). This gave a Bayes 
factor $BF_{12}$ of 13.96 which will be used as a benchmark to compare the other methods to. 

Table~\ref{tab:pima} displays results of estimates of the evidence for both models which were also implemented for this example in \cite{FrielWyse2012}. 
Here the WBIC estimate is not as competitive with the more computationally demanding methods. 
\begin{table}
\resizebox{\columnwidth}{!}{	
 \begin{tabular}{lccccccr}
 \hline \hline
   Method & $\mbox{mean}(\log p(y|M_1))$ &(S.E) & $\mbox{mean}(\log p(y|M_2))$ &(S.E) & $\mbox{mean}(BF_{12})$ &(S.E.) & Relative speed \\
   \hline
   Laplace approximation & -257.26 & (--) & -259.89 & (--) & 13.94 & (--) & 1\\
   Laplace approximation MAP & -257.26 & (0.015) & -259.93 & (0.026) & 14.44 & (0.50) & 108\\
   Harmonic mean estimator & -279.50 & (0.65) & -285.37 & (0.58) & 487.66 & (567.39) & 108\\
   Chib \& Jeliazkov's method & -257.23 & (0.02) & -259.86 & (0.02) & 13.89 & (0.46) & 44\\
   Annealed importance sampling & -257.29 & (0.54) & -260.38 & (0.68) & 34.04 & (35.34) & 194 \\
   Power posteriors & -260.91 & (0.14) & -264.06 & (0.12) & 23.82 & (4.66) & 184\\
   Nested sampling & -256.64 & (1.36) & -260.96 & (2.39) & 75.19 & (206.80) & 808\\
   WBIC & -251.49 & (0.63) & -253.49 & (0.45) & 9.30 & (6.46) & 17\\
   \hline
 \end{tabular}
}
\caption{Pima dataset: Estimated log marginal likelihoods for each model and corresponding Bayes factors for each method along with relative run times with $\tau = 0.01$. The standard error estimates are based on $20$ runs of the algorithm.} 
\label{tab:pima}
\end{table}

Figure~\ref{fig:pima1} displays a 'close-up' plot of temperature versus expected log deviance for a small range of temperatures. MCMC was used to estimate the expected log-deviance 
at each powered posterior. Again, as for the previous example, the temperature $t^*$ such that equation (\ref{eq:WBIC}) is satisfied is smaller than $1/\log{n}$. 
\begin{figure}[tbp] 
	\begin{center}
		\includegraphics[scale=0.65]{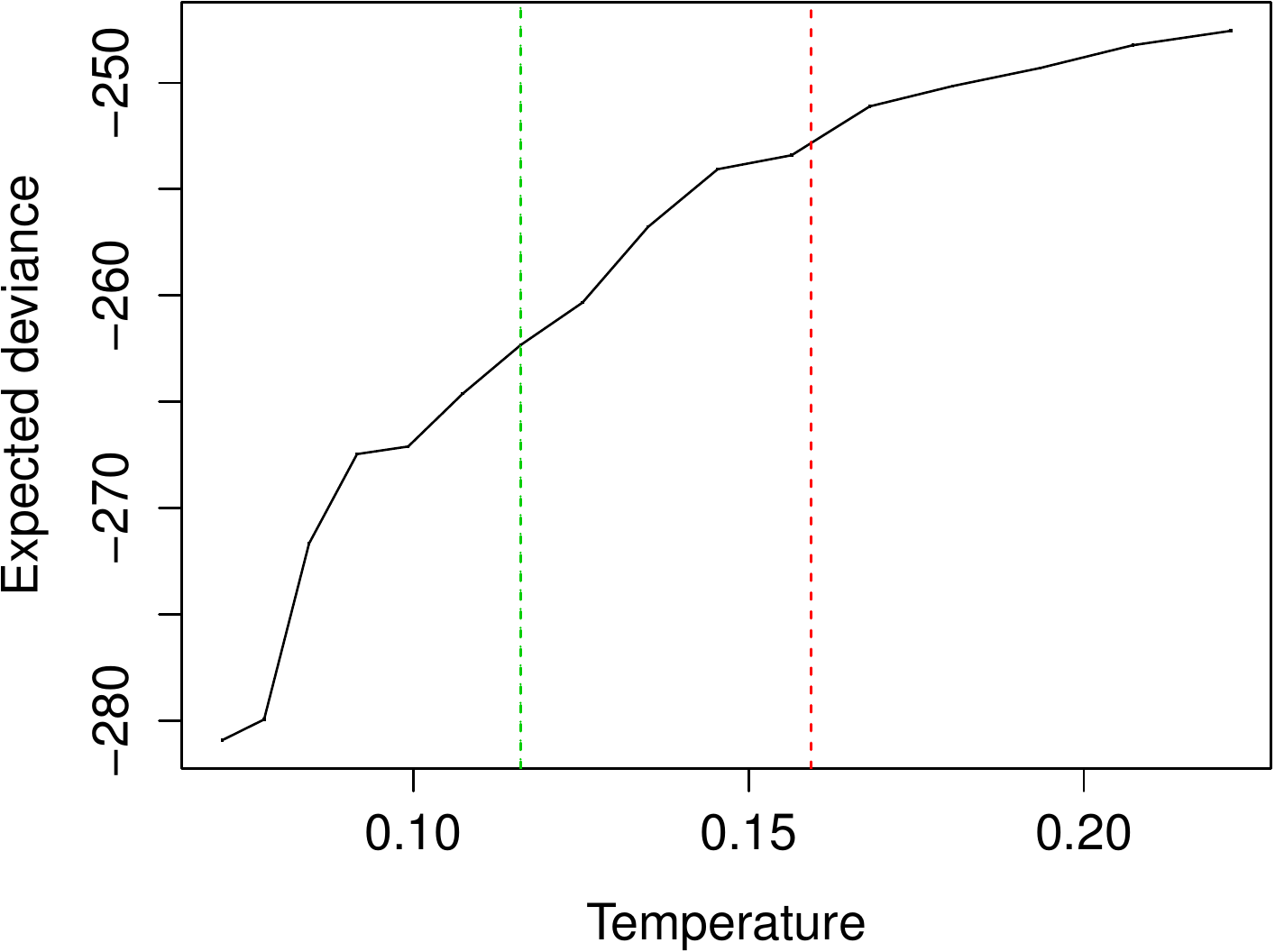}
		\caption{Pima dataset: Expected log deviance vs temperature. The green vertical line on the left shows the temperature $t^*$ corresponding to
		the true evidence. The red vertical line on the right corresponds to the temperature $t_w$ at the WBIC estimate of the evidence.}
		\label{fig:pima1}
	\end{center}
\end{figure}

As before, we now consider the performance of the evidence estimation techniques under a unit information prior formulation. As the models have different numbers of parameters the hyper-parameters are slightly different for the two models under comparison, which, of course, was also necessary in the estimates presented in Table \ref{tab:pima}. The prior mean for each model is defined as 
\begin{equation*}
	\theta_i \sim \mbox{MVN}\left(\mbox{MLE}(\theta_i), \frac{(X_i'X_i)^{-1}}{n} \right), \hspace{20pt} i = 1 \mbox{ or } 2.
\end{equation*}

The evidence estimates under the unit information prior are given in Table \ref{tab:logEvidcomparison_unit_information2}. The values in the table are based on $20$ runs 
of the algorithm. Ignoring the harmonic mean estimate, the results are quite striking. All the log evidence estimates are very similar and the standard errors are 
extremely small. 
One might expect that WBIC would perform well here given the sample size of $n=532$. 
However, Table~\ref{tab:pima} shows that WBIC is not as competitive as the other competing methods. However, when unit information priors are used, the results in
Table~\ref{tab:logEvidcomparison_unit_information2} show that WBIC performs as well as all of the other methods under consideration. 


\begin{table}
	\centering
\resizebox{\columnwidth}{!}{
	\begin{tabular}{l rrrrrr}
		\hline\hline
		Method 							 & $\mbox{mean}(\log p(y|M_1))$ &(S.E) & $\mbox{mean}(\log p(y|M_2))$ &(S.E)	& $\mbox{mean}(BF_{12})$ &(S.E.) \\
		\hline                                                    
		Laplace approximation					& -244.34 & (--)         		& -242.96 & (--)                  			& 0.25 & (--)    	 				\\
		Laplace approximation MAP			& -244.41 & (0.025)       	& -243.05 & (0.036)                			& 0.26 & (0.013)						\\
		Harmonic mean estimator				& -214.30 & (2.89)          & -203.66 & (2.29)                    	& 0.0021 & (0.0082)			\\
		Chib \& Jeliazkov's method		& -244.39 & (0.21)       		& -243.23 & (0.25)                			& 0.33 & (0.11)						\\
		Annealed importance sampling  & -244.34 & (0.0012)   	    & -242.96 & (0.0015)                   	& 0.25 & (0.00048)				\\
		Power posteriors         			& -244.34 & (0.0041)        & -242.96 & (0.0025)                    & 0.25 & (0.0011)					\\
		Nested sampling	          		& -244.34 & (0.0016)        & -242.96 & (0.0018)                   	& 0.25 & (0.00055)				\\
		WBIC								         	& -244.34 & (0.0011)        & -242.96	& (0.0010)                    & 0.25 & (0.00048)				\\
		\hline
	\end{tabular}
}
	\caption{Pima dataset: Log marginal likelihoods for each method and the Bayes factor estimates under a unit information prior. The mean and standard error values in the Table are based on $20$ runs of the algorithm} 
	\label{tab:logEvidcomparison_unit_information2}
\end{table}

\subsection{Finite Gaussian mixture model}
\label{mixture}


Watanabe introduced WBIC with the goal of approximating the evidence for singular statistical models. Here we present an analysis of WBIC for one such singular model, namely a finite
mixture model. 

\begin{case}
Consider now a finite mixture of $K$ components where for $i = 1,\ldots,n$ and with 
$n_k$ the number of observations in the $k$th component $(\sum_{k=1}^{K}{n_k} = n)$ there exist observations $\y = (y_1,\ldots,y_n)$. Conditioned on a set of labels 
$\z = (z_1,\ldots,z_n)$ satisfying $p(z_i = k) = w_k$ with $\sum_{k=1}^{K}{w_k} = 1$ the likelihood is given by


\begin{equation}
	p(\y|\mu, \sigma^2, \z) = \prod_{i = 1}^{n}{\sum_{k = 1}^{K}{ I_{\{z_i=k\}} \frac{1}{\sqrt{2\pi\sigma_k^2}}\exp{\left(-\frac{1}{2\sigma_k^2}(y_i - \mu_i)^2\right)} }}.
	\label{eq:likemix2}
\end{equation}
Where $\mu=\{\mu_1,\dots,\mu_K\}$, $\sigma^2 = \{\sigma_1^2,\dots,\sigma_K^2\}$ and $I_{\{z_i=k\}}$ denotes the indicator function taking the 
value $1$ when $z_i=k$ and zero, otherwise. Prior distributions are assigned to all model parameters as follows:
\begin{eqnarray*}
 \mu_k|\sigma_k^2 &\sim& N(\mu_0,\sigma_0^2), \\
 \sigma_k^2 &\sim& Ga(\alpha_0,\beta_0), \;\; k=1,\dots,K \\
 w &\sim& Dir(\alpha,\dots,\alpha) \\
 z_i &\sim& Multinomial (w),\;\; i=1,\dots,n.
\end{eqnarray*}

With observations in the $k$th component given by $C_k$, the full-conditional distributions for parameters $\mu_k, \sigma_k^2, z_i$ are given by
\begin{eqnarray}
	&&z_i = k|y_i, \mu_k, \sigma_k^2, w_k \propto w_k\frac{1}{\sqrt{2\pi\sigma_k^2}}\exp{\left(-\frac{1}{2\sigma_k^2}(y_i - \mu_j)^2\right)} \nonumber \\
	&&\w|\y,\z \sim \mbox{Dir}(\alpha + n_1, \ldots, \alpha + n_k) \nonumber \\
	&&\mu_k|\y, \sigma_0^2, \z \sim N\left(m_k, s_k^2 \right) \nonumber \\
	&&\sigma_k^2|\y, \mu_k, \z \sim \mbox{Inverse-Gamma}\left(\alpha_0 + n_k/2, \beta_0 + \sum_{i \in C_k}{(y_i - \mu_k)^2}/2\right)	
	\label{eq:fullconds2}
\end{eqnarray}
with $m_k = s_k^2\left( \frac{\mu_0}{\sigma_0^2} + \frac{\sum_{i \in C_k}{y_i}}{\sigma_k^2}\right)$ and $s_k^2 = \left(\frac{1}{\sigma_0^2} + \frac{n_k}{\sigma_k^2}\right)^{-1}$. 
\end{case}
Finally, hyper-parameters are specified as, $\mu_0=0$; $\sigma_0^2=100$; $\alpha_0=\beta_0=1/2$; $\alpha=4$. 

Here $50$ datasets are simulated from a Gaussian mixture with three components such that $\vect{\mu} = (-5,0,5)$ and $\vect{\sigma^2} = c(1,1,1)$. The WBIC and the power posterior 
approximations of the log-evidence are compared here; each power posterior estimate has $t_j = (j/(N))^5$ for $j = 1,2,\ldots,N=40$, as suggested by \shortcite{FrielHurnWyse2013}. 
Figure \ref{fig:mix1} presents the WBIC against the power posterior estimates of the evidence. Again there exists a tendency for WBIC to overestimate the evidence, relative to power posteriors, 
as has been exemplified for all four models under consideration.

\begin{figure}
	\centering
	\begin{subfigure}[t]{0.5\textwidth}
		\includegraphics[width=\textwidth]{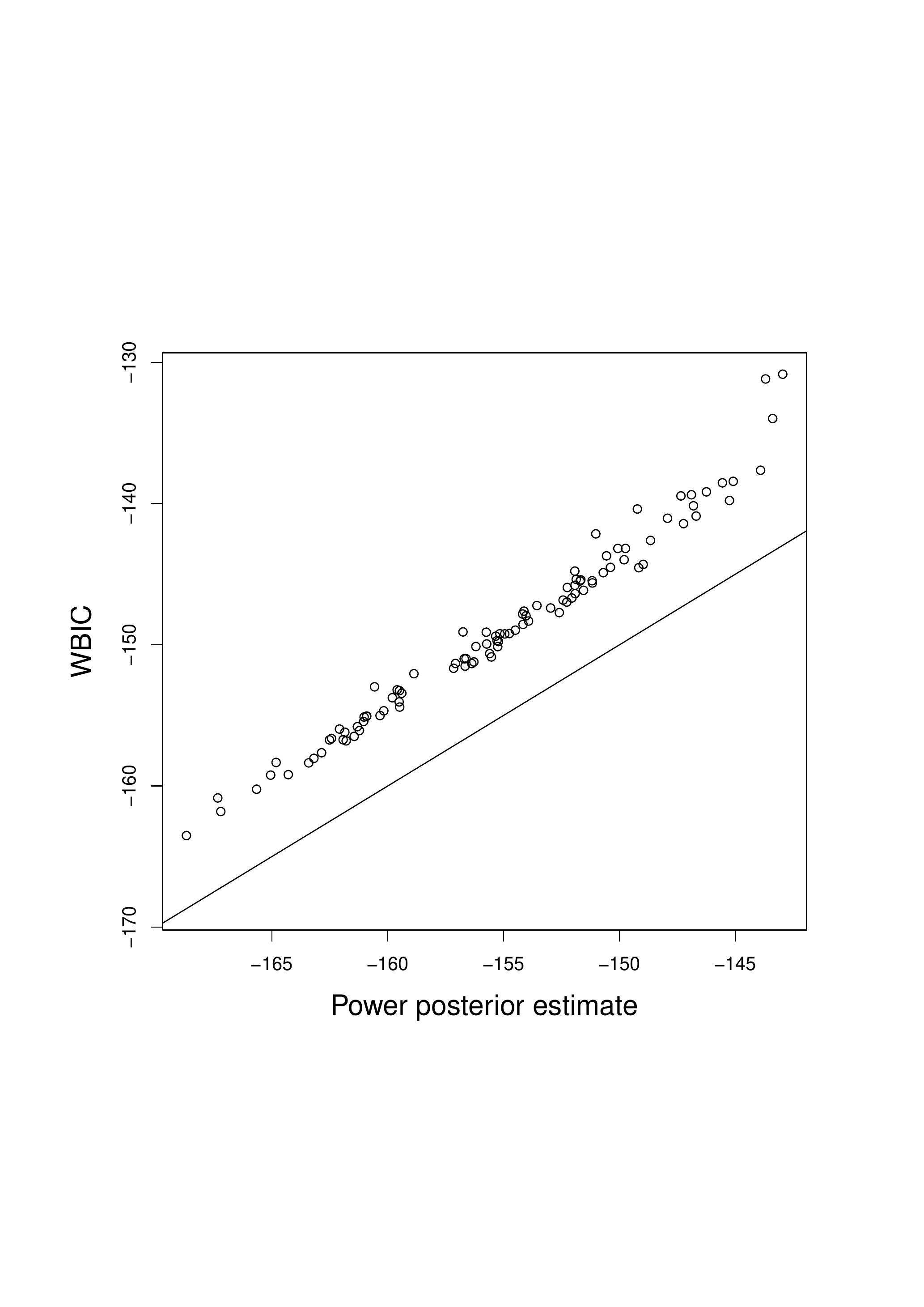}
  \end{subfigure}%
	~
	\begin{subfigure}[t]{0.5\textwidth}
		\includegraphics[width=\textwidth]{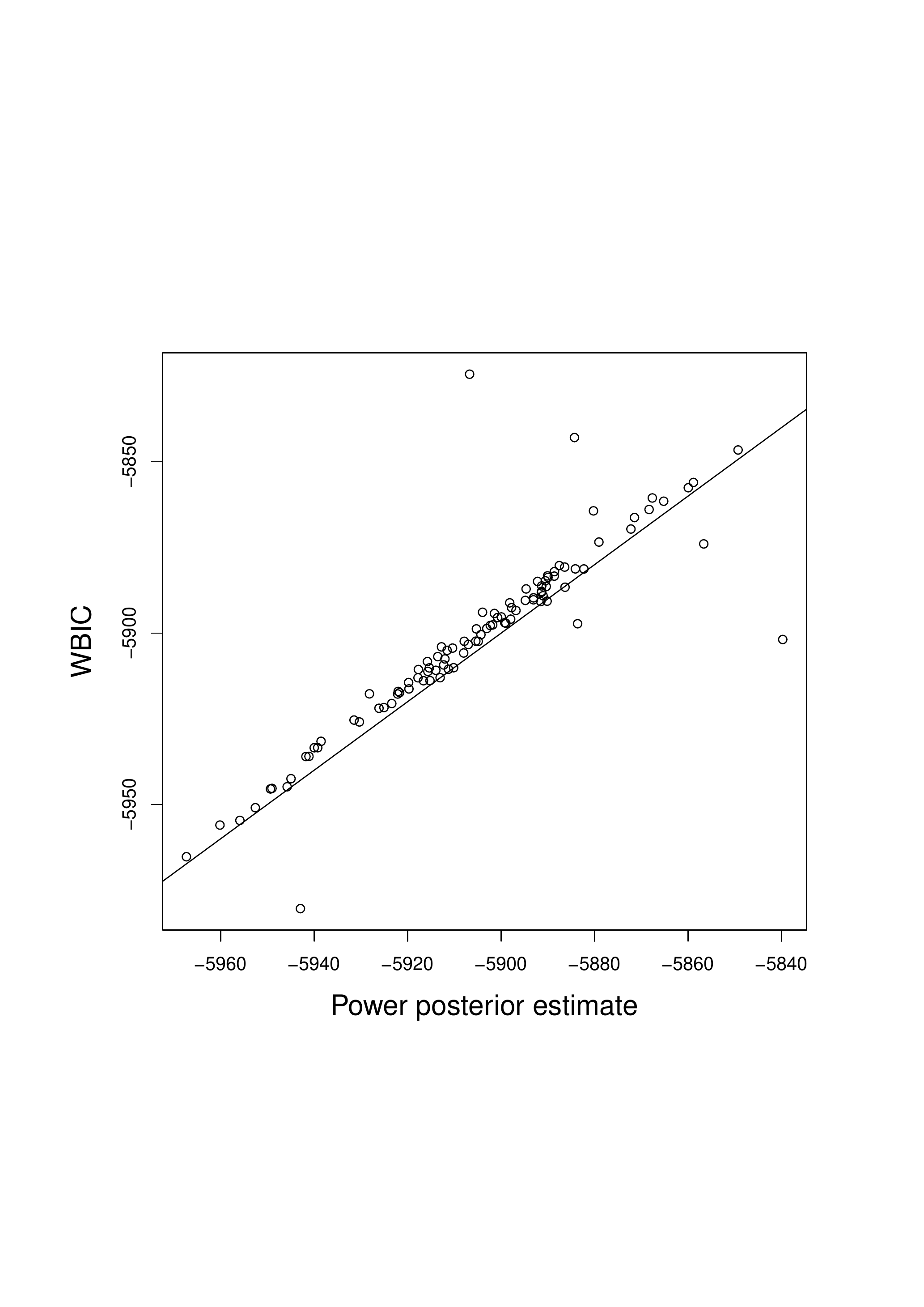}
  \end{subfigure}%
	\caption{Finite Gaussian mixture model: WBIC against the power posterior estimate of the evidence. \textbf{(a)} Sample of size $n=50$. \textbf{(b)} Sample of size $n=1000$. The 
	approximation is performing in a materially similar matter to the tractable normal model.}
	\label{fig:mix1}
\end{figure}

As before the analysis was repeated under a unit information prior formulation. Results, not presented here, were very similar to those for the tractable normal model.

\section{Discussion}
\label{Discussion}

Estimating the model evidence is well understood to be a very challenging task. Watanabe's WBIC is an interesting contribution to this literature. Although motivated from 
statistical learning theory, in principle it can be applied to both regular and singular models. From an implementational point of view, it offers to provide a computationally
cheap approximation of the evidence and this is an overwhelming advantage in favour of its use. 
Our theoretical case-study has suggested that an optimally-tuned WBIC estimator (where one has access to the optimal temperature $t^*$) is likely to perform better than the 
power posterior approach. 
However, the empirical experiments in this paper suggest that it can provide a poor estimate
of the evidence in practice, when $t_w$ is substituted for $t^*$, particularly in cases where one uses weakly informative priors. Of course, it has been argued in the literature that specification of 
priors for statistical model selection requires careful choice. In particular, unit information priors are often advocated for this purpose. Our study suggests that WBIC could provide a useful 
and cost-effective approach in this case. 

In terms of future directions, an interesting question to investigate would be whether one could improve upon the default temperature, $t_w = 1/\log(n)$. Insights such as Lemma~\ref{lem:kl}
may provide a useful starting point and we are currently working in this direction. 

\subsubsection*{Acknowledgements}
The Insight Centre for Data Analytics is supported by Science Foundation Ireland under Grant Number SFI/12/RC/2289. Nial Friel's research was also supported by an Science Foundation Ireland grant: 12/IP/1424. 
James McKeone is grateful for the support of an Australian Postgraduate Award (APA). Tony Pettitt's research was supported by the Australian Research Council Discovery Grant, DP$1101000159$.

\bibliography{wbic}
\end{document}